\documentclass[12pt,twoside]{article}
\usepackage{amsmath,amssymb,euscript}
\usepackage[numbers]{natbib}
\usepackage{hyperref}

\usepackage[margin=1.2in]{geometry}
\usepackage{amssymb,mathrsfs}
\usepackage{enumerate}


\usepackage{amsmath}
\usepackage{amsthm}
\usepackage{amssymb}
\usepackage{graphicx}

\usepackage{epsf}
\usepackage{epic}
\usepackage{makeidx}
\usepackage{psfrag}






\setlength{\textwidth}{16 cm}
\setlength{\oddsidemargin}{.4 cm}
\setlength{\evensidemargin}{.4 cm}



  
\def\opA{{\mathbb A}}
\def\opH{{\mathbb H}}
\def\opN{{\mathbb N}}

\def\R{{\mathbb R}}
\def\Z{{\mathbb Z}}
\def\T{{\mathbb T}}

\def\cA{{\mathcal A}}
\def\cB{{\mathcal B}}

\def\cE{{\mathcal E}}

\def\cH{{\mathcal H}}

\def\cL{{\mathcal L}}

\def\cN{{\mathcal N}}

\def\cS{{\mathcal S}}

\def\cX{{\mathcal X}}

\def\g{{\gamma}}
\def\gam{{\gamma}}
\newcommand{\al}{\alpha}
\newcommand{\s}{\sigma}

\def\om{\omega}


\def\qf{\om^q}


\def\states{{\mathfrak S}}
\def\qfstates{{\mathfrak Q}}

\def\HHFB{\opH_{hfb}}

\def\Hd{{H^1}}
\def\Hsdd{{H^1_s}}
\def\cHLd{{\cH^1}}
\def\XL{{X^1}}



\newcommand{\p}{\partial}

\newcommand{\ra}{\rightarrow}



\def\tr{{\rm Tr}}

\def\1{{\bf 1}}

\newcommand{\ls}{\lesssim}


\newcommand{\DETAILS}[1]{}

\def\eqnn{\begin{eqnarray*}}
\def\eeqnn{\end{eqnarray*}}
\def\eqn{\begin{eqnarray}}
\def\eeqn{\end{eqnarray}}

\def\endprf{\end{proof}} 

\newtheorem{theorem}{Theorem}[section]

\newtheorem{lemma}[theorem]{Lemma}
\newtheorem{corollary}[theorem]{Corollary}
\newtheorem{remark}[theorem]{Remark}



\title
{On the Hartree-Fock-Bogoliubov equations}

\author{V. Bach\footnote{Institut fuer Analysis und Algebra Carl-Friedrich-Gauss-Fakult\"at, Technische Universit\"at Braunschweig, 38106 Braunschweig, Germany, v.bach@tu-bs.de}, S.  Breteaux\footnote{Institut \'Elie Cartan de Lorraine, Universit\'e de Lorraine - Site de Metz, 57000, 
Metz, France, sebastien.breteaux@univ-lorraine.fr}, Th. Chen\footnote{Department of Mathematics, University of Texas at Austin, Austin TX 78712, USA, tc@math.utexas.edu}, J. Fr\"ohlich\footnote{Institut f{\"u}r Theoretische Physik, ETH H{\"o}nggerberg, CH-8093 Z{\"u}rich, Switzerland, juerg@phys.ethz.ch}, I. M. Sigal\footnote{Department of Mathematics, University of Toronto, Toronto, ON M5S 2E4, Canada, im.sigal@utoronto.ca}}






\begin{document}


\maketitle


\begin{abstract}
We review some results of our paper  \cite{BachBreteauxChenFroehlichSigal2016v2} on the ``nonlinear quasifree approximation'' to the many-body Schr\"odinger dynamics of Bose gases. In that paper, we derive, with the help of this approximation,  the time-dependent Hartree-Fock-Bogoliubov (HFB) equations, providing an approximate description of the dynamics of quantum fluctuations around a Bose-Einstein condensate and study properties of these equations. 
\end{abstract} 


\section{Introduction}
\label{sec-intro-1}
The Schr\"odinger equation is used to describe aspects of the dynamics of quantum systems as diverse as atoms, solids and stars. Although it has a very 
compact appearance and is easy to write down, understanding its solutions is fiendishly complicated as soon as more than two particles are involved. Hence, in order to be able to use quantum theory to derive interesting predictions concerning the behavior of physical systems, it is crucial to develop approximation techniques. The most powerful of these yield effective equations that provide fairly accurate descriptions of dynamical physical phenomena, yet are rather simple to handle.

One such technique that works especially well for large systems of identical particles is the self-consistent one
-body approximation yielding equations known as the Hartree- and Hartree-Fock equations, which are used to describe systems of many interacting bosons and fermions, respectively, at zero as well as at positive densities and temperatures. Their generalizations, the  Hartree-Fock-Bogolubov (HFB) and Bogolubov-de Gennes (BdG) equations, were developed to study properties of quantum fluids, such as Bose-Einstein condensation and superfluidity, for bosons, and superconductivity, for charged fermions forming Cooper pairs.

In  \cite{BachBreteauxChenFroehlichSigal2016v2}, we have proposed a simple algorithm for deriving such effective equations, 
 the ``quasifree approximation''.\footnote{This approximation is called ``quaisfree reduction'' in  \cite{BachBreteauxChenFroehlichSigal2016v2}. 
But, as it turned out, the latter expression was already used - in the gauge-invariant context - for a different notion; see  below.} 
We have then applied it to derive the time-dependent extension of the \textit{Hartree-Fock-Bogoliubov (HFB) equations}. 
 Moreover, we have initiated a mathematical theory of these equations. 

In this note, we review the results presented in \cite{BachBreteauxChenFroehlichSigal2016v2}. 
First, we recall some features of the many-body problem. Subsequently, our main results are outlined and some proofs are sketched. 
In particular, we sketch the proof of the existence of solutions to the HFB equations given in the second version v2 of \cite{BachBreteauxChenFroehlichSigal2016v2} which assumes stronger hypotheses than those in the first version v1 of 
\cite{BachBreteauxChenFroehlichSigal2016v1}. In fact, there was an error in \cite{BachBreteauxChenFroehlichSigal2016v1} (kindly pointed out to us by J.~Sok) in the proof of one of the estimates needed in the proof of local well-posedness (see  \cite[Lemma~E.1(2)]{BachBreteauxChenFroehlichSigal2016v1}). 
In Section \ref{sec-k-est} we prove the required estimate but under a stronger condition on $v$ (see Lemma \ref{lem:k-est}).
A proof under weaker conditions (similar to those in \cite{BachBreteauxChenFroehlichSigal2016v1}) will be given elsewhere. 

\section{Quantum-mechanical many-body problem}\label{sec:many-body}
In quantum theory, 
  the time evolution of the quantum state, $\omega_t$, of a many-body system 
   is given by the 
von~Neumann-Landau equation 
\eqn\label{eq-vNeum-1} 
    i\partial_t\om_t(\opA) = \om_t([\opA,\opH]) \,,    
\eeqn
where $\opA$ is an arbitrary operator - an ``observable'' - belonging to the Weyl algebra, ${\frak W}$, over Schwartz space, 
$\cS(\R^d)$, and $\opH$ is the quantum Hamiltonian  
\eqn  \label{model-ham} 
    \opH := \int dx \; \psi^*(x)\,(h\psi)(x) 
    + \frac{1}{2}\int dx\int dy \; v(x-y)\,\psi^*(x) \psi^*(y)
    \psi(x) \psi(y) \,.
\eeqn
Here, $x$ and $y$ denote points in physical space $\mathbb{R}^{d}$, $h$ is the ``one-particle hamiltonian'' given by $h:= -\Delta +V(x)$, $\Delta$ is the Laplacian acting on 
$L^{2}(\mathbb{R}^{d})$, $V(x)$ is the potential of an external force acting on the particles, and $\psi^{*}(x)$ and $\psi(x)$ are the operator-valued distributions, called the 
creation- and annihilation operators, satisfying, for Bose systems, the canonical commutation relations (CCR). The algebra $\frak{W}$ is generated by exponentials of the selfadjoint operators $\psi(f)+\psi^{*}(\overline{f})$, where $f$ is an arbitrary test function in $\cS(\R^d)$. For details, see, e.g., \cite{BachBreteauxChenFroehlichSigal2016v2, BratteliRobinson-II-1996}.  
%

Let $W^{p, r}(\R^d)$ denote the standard Sobolev space over $\R^d$. We will require the following assumptions. \\
\begin{itemize}
\item[(i)] The external potential $V$ is infinitesimally bounded with respect to the Laplacian.
\item[(ii)] The pair potential $v$ is 
 infinitesimally $-\Delta$\,-\,bounded and is even, $v(x)=v(-x)$. 
\end{itemize}
Conditions (i) and (ii) imply that, for systems of finitely many particles, the Hamiltonian $\opH$ is well-defined on a dense domain in the bosonic Fock space $\mathcal{F}$ and self-adjoint on the domain of the operator 
$\opH_0 := \int dx \; \psi^*(x)(-\Delta)\psi(x)$; see \cite{BachBreteauxChenFroehlichSigal2016v2}. In Section \ref{sec-WP-HFB} we will use a stronger condition on $v$: 
\begin{itemize}
\item[(ii')] The pair potential $v$ satisfies  $v \in W^{p, 1}$ with $p>d$ and $v(x) = v(-x)$. \end{itemize}

\section{Quasifree approximation of the full dynamics}

Perhaps the simplest general set of states of a quantum many-body systems consists of the quasifree states; see, e.g., \cite{BratteliRobinson-II-1996}.\footnote{The notion of quasifree states was introduced in \cite{Robinson1965}; see 
\cite{BratteliRobinson-II-1996} and references therein.}
   Quasifree states generalize the Hartree- and Hartree-Fock states, as has been first realized and used in \cite{BachLiebSolovej1994}. They represent a non-abelian version of Gaussian measures. Our goal in this paper is to approximate the general many-body dynamics given in Eq. \eqref{eq-vNeum-1}  by a dynamics that leaves the set of quasifree states invariant. 

We denote the space of all states on the Weyl CCR algebra
${\frak W}$ by $\states$ and the subset of quasifree states by 
$\qfstates \subseteq \states$. (We will distinguish quasifree
states, $\qf \in \qfstates$, from general states, $\om \in \states$, by adding
a superscript ``$q$''.)

We propose to map the solution $\om_t$ of the von Neumann-Landau equation \eqref{eq-vNeum-1} with a quasifree initial condition $\omega_{t=0}=\qf_0 \in \qfstates$ to a family, $(\qf_t)_{t \geq 0} \in C^1\big(\R_0^+; \qfstates \big)$, of quasifree states satisfying the equation
\eqn\label{eq-vNeum-quasifree}
    i\partial_t\qf_t(\opA) = \qf_t([\opA,\opH]) \quad \text{ with initial condition } \quad \qf_{t=0}=\qf_0\,,   
\eeqn
for all observables $\opA$ which are \textit{linear} or \textit{quadratic} in the creation- and annihilation operators. 
We call the map $\omega_{t} \mapsto \omega^{q}_{t}$, as determined by Eq. \eqref{eq-vNeum-quasifree}, the \textit{nonlinear quasifree approximation} of equation \eqref{eq-vNeum-1}.\footnote{As was mentioned above, this map is called ``quasifree reduction'' in  \cite{BachBreteauxChenFroehlichSigal2016v2}. There is another natural map {\bf (cf. \cite{ArakiShiraishi1971}}) (see below) $q: \om \ra \qf$, defined  by $\mu(\qf)=\mu(\om)$, 
where $\mu$ is the map from states to truncated expectations of linear or quadratic as the operator with expressions in creation- and annihilation operators: \[\mu: \om\ra \om [\psi(x)],\ \om [\psi^{*}(y) \, \psi(x)] - \om [\psi^{*}(y)] \, \om [\psi(x)],\  \om [\psi(x) \, \psi(y)] - \om [\psi(y)] \, \om [\psi(x)].\]   
For {\it gauge-invariant} states, i.e., states $\omega$ with $\omega [\psi(x)]=0,\  \omega [\psi(x) \, \psi(y)]=0$); a related map is called ``quasifree reduction'' in \cite{OhyaPetz1993}. 
}
As shown below, this map determines the time-dependent generalization of the \textit{Hartree-Fock-Bogoliubov (HFB) equations}. 

 We emphasize that, in contrast to the von Neumann-Landau equation \eqref{eq-vNeum-1}, equation \eqref{eq-vNeum-quasifree} is \textit{non-linear}.
Equation \eqref{eq-vNeum-quasifree} turns out to be equivalent to the  self-consistent equation
\eqn\label{eq-qf-self-consist}
    i\partial_t \qf_t(\opA) = \qf_t([\opA,\opH_{\rm hfb}(\qf_t)]), \qquad \forall \mathbb{A}\in \frak{W},
\eeqn
where $\opH_{\rm hfb}(\qf)$ is an explicit {\it quadratic} Hamiltonian depending on the state $\omega^{q}$ (see \eqref{HHFB-def}, below). 
\DETAILS{\begin{theorem} \label{thm-Hquadr-dyn-1}
Under conditions (i) and (ii),  a quasifree state $\qf_t\in \cX_{T}^{ \rm qf}$ 
satisfies  \eqref{eq-vNeum-quasifree} if and only if it  satisfies the self-consistent equation 
\eqn\label{eq-omt-commut-3}
    i\partial_t \qf_t(\opA) = \qf_t ({[\opA, \HHFB(\qf_t)]}) \,.
\eeqn
\end{theorem}}

In subsequent work \cite{BenedikterSokSolovej2018}, the relation of the quasifree approximation (reduction) of \cite{BachBreteauxChenFroehlichSigal2016v2} to the Dirac-Frenkel principle used to derive the Hartree-Fock equations has been clarified; (see \cite{Dirac1930, Frenkel1934} for the original works and \cite{Lubich2008, Griesemer2017} for a recent review and an application).

\section{HFB equations and their properties}\label{sec:HFBeqs-prop} 


Recall that a quasifree state $\om^q$ is uniquely determined by the truncated expectations of linear and quadratic expressions in creation- and annihilation operators:
    \begin{equation} \label{omq-mom} 
    \begin{cases}
    \phi(x):=\om^q [\psi(x)],\\
    \gamma (x;y) :=   \om^q[\psi^{*}(y) \, \psi(x)] - \om^q [\psi^{*}(y)] \, \om^q [\psi(x)] ,\\
    \sigma (x,y):=   \om^q[\psi(x) \, \psi(y)] - \om^q [\psi(x)] \, \om^q [\psi(y)] \,.
\end{cases} 
\end{equation}

Let $\gamma$ and $ \sigma$ denote the operators on $L^2(\R^d)$ with integral kernels given by $\gamma (x,y)$ and $\sigma (x,y)$, respectively. It is obvious from definition~\eqref{omq-mom} that
\begin{equation} \label{gam-sig-cond}\gamma=\gamma^*\ge 0\ \text{ and }\ \sigma^*=\bar\sigma, \end{equation} where $\bar\sigma =C \sigma C$, and $C$ is complex conjugation.
More precisely, for any state $\om$, with the truncated expectations $(\phi, \gamma, \sigma)$ defined as in \eqref{omq-mom}, we have 
\begin{equation} \label{Gam-pos} 
\Gamma :=\begin{pmatrix}
\gamma & \sigma\\
 \bar{\sigma} & 1+\bar \gamma
\end{pmatrix}\geq 0\,.
\end{equation}
In the opposite direction, it was shown in \cite[Lemmata~3.2-3.5]{ArakiShiraishi1971} that, if \eqref{Gam-pos} holds then there is a quasi-free state having these $(\phi, \gamma, \sigma)$ as its truncated expectations. 
(The positivity condition on $\Gamma$ in \eqref{Gam-pos} can be expressed directly in terms of $\g$ and $\s$; see \cite{BachBreteauxKnoerrMenge2014}.)

The matrix operator in \eqref{Gam-pos} is called ``generalized one-particle density matrix''. 
 We will use \eqref{Gam-pos} in proving the global existence for the HFB equations (see \eqref{sig-est} of Section \ref{sec-WP-HFB}).
\DETAILS{ 
\eqref{omq-mom} implies that 
 \begin{equation} \label{Gam-pos}\Gamma=\begin{pmatrix}
\gamma & \sigma\\
 \bar{\sigma} & 1+\bar \gamma
\end{pmatrix}\geq 0\,.
\end{equation}
The converse is shown in  \cite{BachLiebSolovej1994} for fermion states and in \cite{Solovej2014} for pure boson ones, with $\tr\g<\infty$ in both cases. 
In the following, we always assume that \eqref{Gam-pos} holds.}

 When evaluating the right side in Eq. \eqref{eq-vNeum-quasifree} explicitly 
for monomials $\opA\in\mathcal A^{(2)}$, with 
\[\mathcal A^{(2)}:=\{\psi(x),\psi^*(x)\psi(y),\psi(x)\psi(y)\},\] 
one arrives at a system of coupled nonlinear
PDE's for  $(\phi_t,\gamma_t,\sigma_t)$, the \textit{Hartree-\-Fock\-Bogoliubov (HFB) equations}. Since quasifree states are uniquely determined by their truncated expectations $(\phi$, $\gamma$, $\sigma)$, the HFB equations are equivalent to equation \eqref{eq-vNeum-quasifree}. They are stated explicitly in 
\eqref{eq:BHF-phi} -~\eqref{eq:BHF-sigma}, below.

\begin{remark}   For states of systems of finitely many particles, such as gases used in BEC experiments in traps,  $\phi_t$ is square-integrable and $\gamma_t$ is a trace-class operator on $L^2(\R^d)$.  To study  states of systems in an infinite volume with an infinite number of particles and finite particle density, 
 one first replaces the one-particle space $L^2(\R^d)$ by $L^2(\Lambda)$, 
where $\Lambda$ is a compact d-dimensional set, for example $\Lambda := \T^d_L = \mathbb{R}^d/(L \Z)^d$ (a compact torus), and then one would pass to the thermodynamic limit,  $\Lambda \nearrow \mathbb{R}^{d}$.
\end{remark} 

To state our results we must introduce appropriate function spaces. Let $M:=\langle \nabla_{x}\rangle=\sqrt{1-\Delta_{x}}$,  where $\Delta_{x}$ is the $d$-dimensional Laplacian. 
\DETAILS{We denote by $\mathcal B$ the space of bounded operators on 
$L^2(\mathbb R ^d)$, with the operator norm denoted by $\|\cdot\|$. 
For $j\in \mathbb N_0$ we define the spaces  
\begin{align}\label{eq-Xinfty-jspace-def-1}
X^{j, \infty} & =\big\{(\phi,\gamma,\sigma)\in 
H^{j} \times \cB^{j} \times \cB^{j}: \gamma=\gamma^*\ge 0\ \text{ and }\ \sigma^*=\bar\sigma
 \big\}\,,
\end{align}
where $H^{j}$ are the Sobolev spaces $H^{j}(\mathbb{R}^{d})=M^{-j} L^{2}(\mathbb{R}^{d})$ and $\cB^{j}=M^{-j}\mathcal B M^{-j}$ and with the norms given by 
\[
\|(\phi,\gamma,\sigma)\|_{X^{j, \infty}}= \|M^{j}\phi\|_{L^{2}} + \|M^{j}\gamma M^{j}\| + \|M^{j}\sigma M^{j}\|.
\]
(The superindex $\infty$ indicates that we are dealing with bounded operators, as opposed to the trace-class and Hilbert-Schmidt ones appearing later.)

 Moreover, we let  $\cX_T^{\infty}:=C^{0}([0,T); X^{j, \infty})\cap C^{1}([0,T); X^{0, \infty})$,  for a fixed $j$ satisfying $j> d/2$ and $j\ge 2$, and   denote by $X^{j, \infty}_{\rm qf}$ and $\cX_{T, \rm qf}^{\infty}$ 
spaces of quasifree states and families of quasifree states with the $1^{st}$ and $2^{nd}$ order truncated expectations  from the spaces $X^{j, \infty}$ and $\cX_T^{\infty}$, respectively. 

Note that any $\al\in\cB^{j}, j>d/2,$ has a bounded, H\"older continuous integral kernel $\al(x, y)$. 
In what follows we use the {\it same notation for functions and  the operators of multiplication by these functions}, which one is meant in every instance is clear from the context.

 $$***$$}
 We denote by $\mathcal L^p$ the Schatten class of bounded operators, $A$, on  $L^2(\mathbb R ^d)$ with the property that $\tr |A|^p<\infty$ 
 and with the  norm $\|A\|_{\mathcal L^p}:=(\tr |A|^p)^{1/p}$. 
 For $j\in \mathbb N_0$, we define the spaces 
\begin{align}\label{eq-Xjspace-def-1}
X^{j} & =\big\{(\phi,\gamma,\sigma)\in 
H^{j} \times \mathcal{H}^{j}_\g \times \cH^{j}_\s: \gamma=\gamma^*\ge 0\ \text{ and }\ \sigma^*=\bar\sigma \big\}\,,
\end{align}
where $H^{j}$ are the standard Sobolev spaces $H^{j}(\mathbb{R}^{d})=M^{-j} L^{2}(\mathbb{R}^{d})$, $\mathcal{H}^{j}_\g=M^{-j}\mathcal{L}^{p} M^{-j}$ and  
 $\mathcal{H}^{j}_\s:=\{\s\in \mathcal L^2: \|M^{j}\s\|_{\mathcal L^2} + \|\s M^{j}\|_{\mathcal L^2} < \infty\}$. The norm on $X^{j}$ is defined as
\begin{align}\label{eq-Xjpqspace-def-2}
\|(\phi,\gamma,\sigma)\|_{X^{j}}=\|M^{j}\phi\|_{L^{2}} + \|M^{j}\gamma M^{j}\|_{\mathcal{L}^{1}}+ \|M^{j}\s\|_{\mathcal L^2} + \|\s M^{j}\|_{\mathcal L^2}. 
\end{align}

We will use the notation $\cX_T :=C^{0}([0,T); X^{j })\cap C^{3}([0,T); X^{0 })$, and we will denote by $X^{j}_{\rm qf}$ and $\cX_{T}^{\rm qf}$ 
the spaces of quasifree states and families of quasifree states, respectively, with $1^{st}$- and $2^{nd}$-order truncated expectations belonging to the spaces $X^{j}$ and $\cX_T$, respectively.


\DETAILS{\paragraph{{\bf Spaces.}} To formulate the results precisely we need to define additional spaces.
We denote by $\mathcal L^1$ the space of  trace-class operators on  $L^2(\mathbb R ^d)$ endowed with the trace norm $\|\cdot\|_{\mathcal L^1}$.  For $j\in \mathbb N_0$ we define the spaces  of triples $(\phi,\gamma,\sigma)$ as 
\begin{align}\label{eq-Xjspace-def-1}
X^{j} & :=
H^{j} \times \mathcal{H}^{j} \times H^{j}_s, 
\end{align}
with $H^{j}$ being the Sobolev space $H^{j}(\mathbb{R}^{d})$,  
 $\mathcal{H}^{j}=M^{-j}\mathcal{L}^{1} M^{-j}$, 
and $H^{j}_s$ the Sobolev space $H^{j}(\mathbb{R}^{2d})$ 
restricted to functions $\sigma$ such that $\sigma(x,y)=\sigma(y,x)$. 
These are real Banach spaces, similar to the Sobolev spaces $H^j(\mathbb R^d)$ in the scalar case, 
when endowed with the norms
\[
\|(\phi,\gamma,\sigma)\|_{X^{j}}=\|M^{j}\phi\|_{L^{2}} + \|M^{j}\gamma M^{j}\|_{\mathcal{L}^{1}}+ \|\sigma\|_{H^{j}}. 
\]

\begin{remark}
We identify Hilbert-Schmidt operators on $L^2(\mathbb R^d)$, 
denoted by $\mathcal{L}^{2}$,
with their kernels in $L^{2}(\mathbb{R}^{2d})$. It is clear from the context which we are dealing with. 
\end{remark}

Furthermore, we let  $\cX_T:=C^{0}([0,T); X^{3})\cap C^{1}([0,T); X^{1})$ and,  as above, we denote by $X^{j}_{\rm qf}$ and $\cX_T^{\rm qf}$ the spaces of quasifree states and families of quasifree states with the $1^{st}$ and $2^{nd}$ order truncated expectations  from the spaces $X^{j}$ and $\cX_T$, respectively. }

\begin{theorem}\label{prop-HFB-vfull-1}
Assume conditions (i) and (ii) of Sect. \ref{sec:many-body}.\,
Then $\qf_t\in \cX_{T}^{\rm qf} 
$  satisfies 
\eqn\label{eq-omt-commut-2}
    i\partial_t\qf_t(\opA) = \qf_t([\opA,\opH])\,,
    \qquad
    \forall \; \opA  \in \mathcal{A}^{(2)} \,, \; 
\eeqn
for the Hamiltonian $\opH$ given in Eq. \eqref{model-ham}, if and only if the triple $(\phi_t,\gamma_t,\sigma_t) = \mu(\qf_t)\in \cX_T 
$ of  $1^{st}$- and $2^{nd}$-order truncated expectations of $\qf_t$ satisfies the time-dependent
Hartree-Fock-Bogoliubov equations 
\begin{align}
i\partial_{t}\phi_{t} & =h (\gamma_t)\phi_{t}+k (\sigma^{\phi_{t}}_t)\bar{\phi}_{t} \label{eq:BHF-phi}\,,\\
i\partial_{t}\gamma_{t} & =[h (\gamma_t^{\phi_{t}}),\gamma_{t}]+k (\sigma^{\phi_{t}}_t)\sigma_{t}^*-\sigma_t k (\sigma^{\phi_{t}}_t)^* \label{eq:BHF-gamma}\,,\\
i\partial_{t}\sigma_{t} & =[h (\gamma_t^{\phi_{t}}),\sigma_{t}]_{+}+[k (\sigma^{\phi_{t}}_t),\gamma_{t}]_+ +k (\sigma^{\phi_{t}}_t), \label{eq:BHF-sigma}
\end{align}
where $[A_1,A_2]_{+}=A_1A_2^{T}+A_2A_1^{T}$, $\gamma^{\phi}:=\gamma+|\phi\rangle\langle\phi|$ and $\sigma^{\phi}:=\sigma +|\phi\rangle\langle\bar{\phi}|$, and 
\begin{align}\label{h}
& h (\gamma)  =h+b [\gamma]\,,\  b [\gamma]:= v*d(\gamma)  + v \,\sharp\, \gamma \,, \\
 \label{k}
 & k (\sigma)  =v\,\sharp \, \sigma\,,\ \quad d(\al)(x):=\al(x, x).
\end{align}
In these equations the operator $k : \al \ra v \,\sharp\, \al$ is defined through
\eqn  \label{v-sharp}
v \,\sharp\, \al \, (x;y):= v(x-y)\al (x;y) \,.
\eeqn
\end{theorem}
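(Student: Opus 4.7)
The plan is a direct calculation that reduces the operator equation \eqref{eq-omt-commut-2} to the three coupled PDEs \eqref{eq:BHF-phi}--\eqref{eq:BHF-sigma} by exploiting two structural facts about quasifree states. First, since a quasifree state is uniquely determined by its truncated $1$- and $2$-point functions $(\phi_t, \gamma_t, \sigma_t)$ (the Araki--Shiraishi correspondence recalled around \eqref{Gam-pos}), prescribing $\partial_t \qf_t(\opA)$ for all $\opA \in \mathcal{A}^{(2)}$ is the same as prescribing the time derivatives of the triple; thus the three choices $\opA = \psi(x),\ \psi^*(x)\psi(y),\ \psi(x)\psi(y)$ in \eqref{eq-omt-commut-2} generate three evolution equations, whose left-hand sides (after subtracting the inhomogeneous $\phi_t$-pieces needed to pass from raw to truncated expectations) are precisely $i\partial_t \phi_t$, $i\partial_t \gamma_t$ and $i\partial_t \sigma_t$. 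Second, the right-hand sides $\qf_t([\opA, \opH])$ contain, after using the canonical commutation relations, only $3$- and $4$-point expectations; because $\qf_t$ is quasifree these factorize via the Wick rule (in its form accommodating a non-trivial condensate) into polynomials in $\{\phi_t,\bar\phi_t,\gamma_t,\sigma_t,\bar\sigma_t\}$. This is the sole mechanism closing the hierarchy.

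The core computation is then purely algebraic. Using $[\psi(x), \psi^*(y)] = \delta(x-y)$ and the evenness of $v$, one finds $[\psi(x), \opH] = (h\psi)(x) + \int dy\, v(x-y)\psi^*(y)\psi(y)\psi(x)$; evaluating $\qf_t$ on the cubic term yields
\[
\qf_t\bigl[\psi^*(y)\psi(y)\psi(x)\bigr]
= \bar\phi_t(y)\phi_t(y)\phi_t(x) + \gamma_t(y,y)\phi_t(x) + \gamma_t(x,y)\phi_t(y) + \sigma_t(x,y)\bar\phi_t(y).
\]
Grouping the $\phi_t$- and $\bar\phi_t$-contributions into the shifted kernels $\gamma^{\phi_t}_t$ and $\sigma^{\phi_t}_t$ from the theorem statement and invoking \eqref{h}--\eqref{k}, one reads off \eqref{eq:BHF-phi}. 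The analogous computation with $\opA = \psi^*(x)\psi(y)$ produces, after one further Wick decomposition of a quartic expectation, the commutator structure $[h(\gamma^{\phi}_t), \gamma_t] + k(\sigma^{\phi}_t)\sigma_t^* - \sigma_t k(\sigma^{\phi}_t)^*$ of \eqref{eq:BHF-gamma}; and $\opA = \psi(x)\psi(y)$ yields the anticommutator-plus-source structure of \eqref{eq:BHF-sigma}. The converse direction is immediate: given $(\phi_t,\gamma_t,\sigma_t) \in \cX_T$ solving the HFB system, reconstruct $\qf_t$ via Araki--Shiraishi and run the same identities in reverse.

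The main obstacle is bookkeeping rather than analysis. Each Wick expansion of a cubic or quartic expectation generates many terms, and they must be sorted so that precisely those involving $\phi_t$ condense into the shifted objects $\gamma^{\phi_t} = \gamma_t + |\phi_t\rangle\langle\phi_t|$ and $\sigma^{\phi_t} = \sigma_t + |\phi_t\rangle\langle\bar\phi_t|$ that appear on the right-hand side of the HFB equations; the symmetry $\sigma_t(x,y) = \sigma_t(y,x)$ and the evenness $v(x) = v(-x)$ are repeatedly used to symmetrize the double integral in $\opH$ into the single-integral operators $h(\gamma)$ and $k(\sigma)$. Assumptions (i) and (ii) enter only to guarantee that $\opH$ is densely defined and self-adjoint, so that every intermediate quantity in the above computation---most notably $\qf_t([\opA,\opH])$ together with the operators $h(\gamma_t)$, $b[\gamma_t]$, $k(\sigma_t)$---is well-defined on the class $\cX_T^{\rm qf}$.
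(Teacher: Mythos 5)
Your proposal is correct and follows essentially the same route the paper takes (via \cite{BachBreteauxChenFroehlichSigal2016v2}): evaluate \eqref{eq-omt-commut-2} on the three monomials in $\mathcal{A}^{(2)}$, use the CCR to reduce $[\opA,\opH]$ to cubic and quartic expressions, close these with the Wick rule for quasifree states with condensate, and invoke the fact that a quasifree state is determined by its truncated expectations to get the equivalence; your explicit computation for the $\phi$-equation, including the regrouping into $\gamma^{\phi}$ and $\sigma^{\phi}$, matches \eqref{eq:BHF-phi}. The only omissions are the routine (if lengthy) Wick bookkeeping for the $\gamma$- and $\sigma$-equations, which you correctly describe, including the subtraction of the $\partial_t\phi$-terms needed to pass from raw to truncated second moments.
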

%

It is shown in \cite{BachBreteauxChenFroehlichSigal2016v2} that, for all times $t>0$, the r.h.s. of \eqref{eq:BHF-phi} - \eqref{eq:BHF-sigma} determine an element in the space $X^{0}$. 
%
%


\DETAILS{Self-consistent equations:  $(\phi_t,\gamma_t,\sigma_t)\in \cX_T$ 
satisfy the HFB equations \eqref{eq:BHF-phi} to~\eqref{eq:BHF-sigma} if and only if 
the corresponding  quasifree state $\qf_t\in \cX_{T}^{ \rm qf}$ 
  satisfies the self-consistent equation 
\eqn\label{eq-omt-commut-3}
    i\partial_t \qf_t(\opA) = \qf_t ({[\opA, \HHFB(\qf_t)]}) \,,
\eeqn
where $\HHFB(\qf), \qf\in X^{\rm qf},$  is the self-adjoint (see e.g. \cite{Derezinski2017}), quadratic Hamiltonian given for $(\phi,\gamma,\sigma)=\mu(\qf)\in X^{1}$ by} 
Moreover, the {\it quadratic HFB Hamiltonian},  $\HHFB(\qf), \qf\in X^{\rm qf},$ in the self-consistent equation \eqref{eq-qf-self-consist} is given by 
\begin{align}\label{HHFB-def}
    \HHFB(\qf) 
    & =\int \psi^{*}(x)h_v(\gamma)\psi(x)\, dx 
      \nonumber \\
    & \quad -\int b [|\phi \rangle\langle\phi|]\phi (x)\psi^{*}(x) \, dx+h.c.  
    \nonumber \\
    & \quad+\frac{1}{2}\int\psi^{*}(x)(v\# \s)\psi^{*}(x)\, dx+h.c. \,,
\end{align}
where $(\phi,\gamma,\sigma)$ are
 the $1^{st}$- and $2^{nd}$-order truncated expectations in the state $\omega^{q}$. The operator  $\HHFB(\qf), \, \qf\in X^{\rm qf},$ is a self-adjoint; see e.g.\ \cite{Derezinski2017}. 


For the pair potential $v(x, y)=g \delta(x-y)$, the HFB equations in a somewhat different form have first appeared in the physics literature; see \cite{DoddEdwardsClarkBurnett1998, Griffin1996, ParkinsWalls1981} and, for further discussion, \cite{BachBreteauxChenFroehlichSigal2016v2}.
 Here are some key properties of \eqref{eq:BHF-phi} - \eqref{eq:BHF-sigma} at a glance: 

\begin{itemize}
\item[(A)]    {\it Conservation of the total particle number}: If $\qf_t\in \cX_T^{\rm qf}$ solves Eq. \eqref{eq-omt-commut-2} (or \eqref{eq-qf-self-consist}) then the number of particles, 
\eqn 
    \cN(\phi_t,\gamma_t,\sigma_t) \, := \, \qf_t(\opN) \,, 
\eeqn 
where $\opN$ is the particle-number operator,  is conserved. 

\item[(B)]  {\it Existence and conservation of the  energy}: If $\qf_t\in  X^{\rm qf}_T$ solves \eqref{eq-omt-commut-2} then the energy 
\eqn 
    \mathcal{E}(\mu(\qf_t)) 
      :=\qf_t(\opH) 
\eeqn
 is conserved. Moreover, $\cE$ is given explicitly by the expression 
\begin{align}
    \mathcal{E}(\phi,\gamma,\sigma)=\tr[h(\gamma+|\phi\rangle\langle\phi|) &+b[|\phi\rangle\langle\phi|]\gamma
    +\frac{1}{2} b[\gamma]\gamma]\notag\\
 &   +\frac{1}{2}\int v(x-y)|\sigma(x,y)  +\phi(x)\phi(y)|^{2}dxdy\,.
    \label{energy}
\end{align}  

\item[(C)]  {\it Positivity preservation property}:  If $\Gamma=\big(\begin{smallmatrix} %
\gamma & \sigma\\
 \bar{\sigma} & 1+\bar \gamma
\end{smallmatrix}\big)\geq 0$ at $t=0$, then this holds for all times.


\item[(D)]  {\it Global well-posedness of the HFB equations}: See Theorem~\ref{thm:existence-uniquenes-up-to-Coulomb} below.

 \end{itemize}
Note that conservation of the  total particle number is related to invariance of the Hamiltonian $\mathbb{H}$ under the transformation $\psi^\sharp\rightarrow (e^{i\theta}\psi)^\sharp$, i.e., to $U(1)$-gauge invariance of the dynamics.
 
It is easy to verify that, under our assumptions, the operator in \eqref{HHFB-def} and the energy functional in \eqref{energy} are well defined. For example, for $(\phi,\gamma,\sigma)\in X^{1}$, we have that $\|v\# \s\|_{L^{2}}\ls \|M_{x-y} \s\|_{L^{2}}\ls  \|(M_{x} + M_{y} )\s\|_{L^{2}}\simeq \|\s\|_{\cH^{1}_\s}$, where $v\# \s, M_{x-y} \s$ and $(M_{x} + M_{y} )\s$ are treated as functions (integral kernels) in $L^{2}(\R^d_x \times \R^d_y)$.  
 
Statements (A) and (B) follow from the following general, yet elementary result.

\begin{theorem}
\label{thm:preservation-particle-number}
Let $\opA\in \cA^{(2)}$ be an operator commuting with the Hamiltonian, i.e., $[\opH,\opA]=0$. 
 Then $\qf_t(\opA)$ is   conserved:
 \eqn 
    \qf_t(\opA) = \qf_0(\opA) \;\;\;\;\forall\;t\in\R\,.
\eeqn
\end{theorem}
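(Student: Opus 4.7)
The strategy is to read the conclusion off directly from the defining equation \eqref{eq-vNeum-quasifree} of the quasifree approximation. Since $\cA^{(2)}$ consists of monomials linear or quadratic in the creation and annihilation operators, and \eqref{eq-vNeum-quasifree} holds for all such monomials, the identity
\[
i\partial_t \qf_t(\opA) = \qf_t([\opA,\opH])
\]
is linear in $\opA$ and therefore extends immediately to finite linear combinations of elements of $\cA^{(2)}$, and, by approximation, to convergent integrals of them such as the number operator $\opN = \int dx\,\psi^*(x)\psi(x)$. Once this identity is available for the $\opA$ in the hypothesis, the assumption $[\opH,\opA]=0$ gives $\qf_t([\opA,\opH])=0$, so $\partial_t \qf_t(\opA)\equiv 0$ on the interval of existence. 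Integrating in $t$ yields $\qf_t(\opA)=\qf_0(\opA)$.

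To carry this out rigorously, I would first verify that, for $\qf_t\in \cX_T^{\rm qf}$, the quantity $\qf_t(\opA)$ is well defined and continuously differentiable in $t$. For $\opA$ a bounded linear combination of elements of $\cA^{(2)}$, this is contained in the definition of $\cX_T^{\rm qf}$ (which requires $C^1$ in time at the level of truncated expectations). For an unbounded $\opA$ such as $\opN$, one has the explicit formula $\qf_t(\opN)=\tr(\gamma_t)+\|\phi_t\|_{L^2}^2$, which is finite and smooth in $t$ because $(\phi_t,\gamma_t,\sigma_t)\in \cX_T$, so in particular $\gamma_t\in \cH^j_\gamma\subset \cL^1$ and $\phi_t\in H^j$.

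Next I would justify the extension of \eqref{eq-vNeum-quasifree} to $\opN$ via a smooth approximation $\opN_R=\int dx\,\chi_R(x)\psi^*(x)\psi(x)$ with $\chi_R$ a compactly supported cutoff increasing to $1$. For each $R$, the identity for $\opN_R$ holds by linearity; one then passes to the limit on both sides using the trace-norm control on $\gamma_t$ and $L^2$ control on $\phi_t$ provided by $\cX_T$, together with dominated convergence for the commutator $[\opN_R,\opH]$, which converges (in the sense of matrix elements between states in $\cX_T^{\rm qf}$) to $[\opN,\opH]=0$.

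The only real obstacle is this last approximation and limit step for unbounded quadratic observables: one needs uniform bounds on $\qf_t([\opN_R,\opH])$ and on $\partial_t \qf_t(\opN_R)$ as $R\to\infty$. This is where the regularity built into $\cX_T$ is used, but the argument is standard once the function-space framework of Section \ref{sec:HFBeqs-prop} is in place. Conservation of total particle number, statement (A), then follows as an immediate corollary upon noting that the Hamiltonian \eqref{model-ham} is $U(1)$-gauge invariant, so that $[\opH,\opN]=0$ on the natural dense domain.
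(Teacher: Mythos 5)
Your proposal is correct and follows essentially the same route as the paper: the paper's proof is precisely the one-line observation that \eqref{eq-omt-commut-2} (equivalently \eqref{eq-vNeum-quasifree}) gives $i\partial_t\qf_t(\opA)=\qf_t([\opA,\opH])=0$ for quadratic $\opA$ commuting with $\opH$. Your additional care in extending the identity from the monomials in $\cA^{(2)}$ to integrated observables such as $\opN$ via cutoffs is a reasonable filling-in of details the paper leaves implicit, not a different argument.
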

\begin{proof} 
This follows from \eqref{eq-omt-commut-2} for any operator $\opA$ quadratic in creation- and annihilation operators, with 
$[\opA,\opH]=0$. 
\end{proof} 

\DETAILS{\begin{theorem} \label{thm-Hquadr-dyn-1}
Assume that the pair potential $v$ is infinitesimally $\Delta-$bounded. 
{\bf Then}  $(\phi_t,\gamma_t,\sigma_t)\in \cX_T$ 
satisfy the HFB equations \eqref{eq:BHF-phi} to~\eqref{eq:BHF-sigma} if and only if 
the corresponding  quasifree state $\qf_t\in \cX_{T}^{ \rm qf}$ 
  satisfies the self-consistent equation 
\eqn\label{eq-omt-commut-3}
    i\partial_t \qf_t(\opA) = \qf_t ({[\opA, \HHFB(\qf_t)]}) \,.
\eeqn
\end{theorem}

In \cite{BachBreteauxChenFroehlichSigal2016v2}, we also  initiated a mathematical discussion of the HFB equations. In particular, for $\gam$ trace-class and $\sigma$ Hilbert-Schmidt operators, we showed 

\begin{itemize}
\item Global well-posedness (Theorem~\ref{thm:existence-uniquenes-up-to-Coulomb}) of the HFB equations;
\item Conservation of the total particle number
\eqn 
    \cN(\phi_t,\gamma_t,\sigma_t) \, := \, \qf_t(\opN) \,, 
\eeqn 
where $\opN$ is the number operator. 

\item Existence and conservation (under certain conditions on $v$ and $\qf_t$) of the  energy: 
\eqn 
    \mathcal{E}(\phi_t,\gamma_t,\sigma_t)  :=\qf_t(\opH) 
\eeqn
 \end{itemize}
%
Note that conservation of the  total particle number is related to the $U(1)$-gauge invariance, i.e., 
invariance under the transformation $\psi^\sharp\rightarrow (e^{i\theta}\psi)^\sharp$, of  the Hamiltonian $\opH$.

  The  total particle number and energy, $\cN (\phi,\gamma,\sigma) \, := \, \qf (\opN)$ and $\mathcal{E} (\phi,\gamma,\sigma)  :=\qf (\opH)$, as functions of $ (\phi,\gamma,\sigma)$ can be evaluated explicitly:  
\eqn 
    \cN(\phi,\gamma,\sigma) \,  = \,  \int \big(\gamma (x;x)+|\phi (x)|^2\big) dx\,,
\eeqn 
 and the energy,  $\mathcal{E} (\phi,\gamma,\sigma)$, is given in \eqref{energy} below. (In terms of $\opH_{\rm hfb}(\qf)$, we have that $\mathcal{E}(\phi,\gamma,\sigma)  :=\qf(\opH) =\qf(\opH_{\rm hfb}(\qf))+\text{scalar}$.)
\begin{theorem} \label{thm:partnumb-en-conserv} Let $\qf_t\in \cX_T^{\rm qf}$ solve \eqref{eq-omt-commut-2} (or \eqref{eq-omt-commut-3}). Then the number of particles $\cN(\phi_t,\gamma_t,\sigma_t)=\qf_t(\opN)$ is conserved. 

If, in addition, $v$ is as in Theorem \ref{thm-Hquadr-dyn-1}, then the energy $\qf_t(\opH)$ is conserved. \end{theorem}

The above result follows from the following general and elementary statement.

\begin{theorem}
\label{thm:preservation-particle-number}
Assume that an observable $\opA\in \cA^{(2)}$ satisfies  $[\opH,\opA]=0$. 
 Then $\qf_t(\opA)$ is   conserved:
 \eqn 
    \qf_t(\opA) = \qf_0(\opA) \;\;\;\;\forall\;t\in\R\,.
\eeqn
\end{theorem}
\begin{proof} 
This follows from \eqref{eq-omt-commut-2} for $\opA$ of order up to two, with $[\opA,\opH]=0$. 
\end{proof} 
Finally, we have
\begin{theorem} \label{prop-en-vfull-1} 
Let $v$ be as in Theorem \ref{thm-Hquadr-dyn-1} and $\qf\in X^{\rm qf}$.
Then  the energy $\qf(\opH) = \mathcal{E}(\phi,\gamma,\sigma)$ is given explicitly as 
\begin{multline}
    \mathcal{E}(\phi,\gamma,\sigma)=\tr[h(\gamma+|\phi\rangle\langle\phi|)]
    +\tr[b[|\phi\rangle\langle\phi|]\gamma]\\
    +\frac{1}{2}\tr[b[\gamma]\gamma]+\frac{1}{2}\int v(x-y)|\sigma(x,y)
    +\phi(x)\phi(y)|^{2}dxdy\,.
    \label{energy}
\end{multline}  
\end{theorem}}

It is in the proof of the part of Statement (D) concerning local existence that an error was made in  \cite{BachBreteauxChenFroehlichSigal2016v1}. 
In the next two sections we look into this problem more closely.

In  \cite{BenedikterSokSolovej2018}, the program outlined in this paper has been pursued for equations analogous to the HFB equations valid for fermions, namely the Bogolubov-de Gennes equations; see also \cite{ChenSigal2017}. For references to related work see  \cite{BachBreteauxChenFroehlichSigal2016v2, ChenSigal2017, BenedikterSokSolovej2018}.

\begin{remark}
The HFB equations for $\phi_{t}$,
$\gamma_t$ and $\sigma_t$ stated in Theorem \ref{prop-HFB-vfull-1}
 can be reformulated in terms of $\phi_t$ and the generalized one-particle density matrix 
$\Gamma_t=\big(\begin{smallmatrix}\gamma_t & \sigma_t\\
\overline{\sigma}_{t} & 1+\overline{\gamma}_{t} \end{smallmatrix}\big)$. 
It has been shown in \cite{BachBreteauxChenFroehlichSigal2016v2} that the  diagonalizing maps, $\mathcal{U}_t$, for $\Gamma_t$ are ``symplectomorphisms'' and that the equation of motion for $\Gamma_t$ is equivalent to an evolution equation for these symplectomorphisms. The latter property allows us
\begin{enumerate}
 \item[(a)] {to give another proof of the conservation of energy without using the second quantization formalism;}
 \item[(b)] {to express the energy functional $\mathcal{E}$ in terms of the diagonalizing maps $\mathcal{U}_t$ and $\phi_t$ and to interpret it as a Hamilton functional on an infinite-dimensional, affine, complex phase space;}
\item[(c)] {to show that the HFB equations are equivalent to the Hamiltonian equations of motion for 
$(\mathcal{U}_t, \phi_t)$; and}
 \item[(d)] {to relate the time-dependent HFB equations \eqref{eq:BHF-gamma} - \eqref{eq:BHF-sigma} to the  time-independent HFB equations used in the physics literature.}
\end{enumerate}
\end{remark}


\section{Existence and Uniqueness of Solutions} 
\label{sec-WP-HFB}


We recall that, given a Banach space $X$, a function $f\in C(X)$ continuous on $X$, and the infinitesimal generator $-iA$ of a strongly continuous semigroup $G(t)$ on $X$, 
 a continuous function $\rho:[0,T)\to X$ is called a \textit{mild solution} of the equation
 \begin{equation}\label{eq:def_problem}
i\partial_t \rho_t =A\rho_t +f(\rho_t)\,, \qquad \rho_{t=0} =\rho_{0}\in X\,,
\end{equation}
iff $\rho_t$ solves the fixed point equation in integral form  (with the integral understood in the sense of Bochner)\begin{equation}\label{eq:def_mild_HFB}
\rho_t=G(t)\rho_0 -i\int_0^t G(t-s) f(\rho_s)\,ds.
\end{equation}

We have the following result
\begin{theorem}
\label{thm:existence-uniquenes-up-to-Coulomb}
Suppose that $d\leq3$, and let $\rho_{0}=(\phi_{0},\gamma_{0},\sigma_{0})\in \XL$. 
Suppose, furthermore, that the potentials $V$ and $v$ satisfy conditions (i) and (ii')
 of Section \ref{sec:many-body}. 
Then the following hold:
\begin{enumerate}
\item[(a)] \label{ite:local-existence-mild}
\textit{Existence and uniqueness of a local mild solution:}
There exists some $T$, with $0<T\leq \infty$, such that the HBF equations \eqref{eq:BHF-phi}-\eqref{eq:BHF-sigma} 
have a unique maximal mild  solution
\[
(\rho_{t})_{t\in[0,T)}=(\phi_{t},\gamma_{t},\sigma_{t})_{t\in[0,T)}\in C^{0}([0,T);\XL).
\]

\item[(b)]\label{ite:local-existence-classical}
\textit{Existence and uniqueness of a local classical solution:}
If $\rho_{0}\in X^{3}$, then 
\[
(\rho_{t})_{t\in[0,T)}\in C^{0}([0,T);X^{3})\cap C^{1}([0,T);\XL)
\]
and $\rho_{t}$ satisfies the HBF equations \eqref{eq:BHF-phi}-\eqref{eq:BHF-sigma} in the classical sense.

\item[(c)]\label{ite:conservation-laws}
\textit{Conservation laws:}
The number of particles $\tr[\gamma_t]$ and the energy \eqref{energy} are constant in time.

\item[(d)]  {\it Positivity preservation property}:  If $\Gamma=\big(\begin{smallmatrix} 
\gamma & \sigma\\
 \bar{\sigma} & 1+\bar \gamma
\end{smallmatrix}\big) \geq 0$ at $t=0$, then this property holds for all times.

\item[(e)]\label{ite:global-existence}
\textit{Existence of a global solution:}
If additionally $\Gamma_0:=\big(\begin{smallmatrix}\gamma_0 & \sigma_0\\ \bar{\sigma}_0 & 1+\bar{\gamma}_0\end{smallmatrix}\big)\geq 0$, then
 the solution $\rho_t$ is global, i.e.,~$T=\infty$.
\end{enumerate}
\end{theorem}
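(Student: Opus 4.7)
The plan is to recast the HFB system \eqref{eq:BHF-phi}--\eqref{eq:BHF-sigma} as a semilinear evolution equation of the abstract form \eqref{eq:def_problem} on the Banach space $\XL$, with linear generator $A\rho$ given component-wise by $h\phi$, $[h,\gamma]$, and $[h,\s]_{+}$, and with nonlinearity $f(\rho)$ collecting all remaining terms involving $b[\gamma^{\phi}]$ and $k(\s^{\phi})$. Under condition (i), $-ih$ generates a unitary group $e^{-ith}$ on $L^{2}$, and the induced flow $\gamma\mapsto e^{-ith}\gamma e^{ith}$ together with its analogue on $\s$ forms a $C^{0}$-group on $\cH^{1}_{\g}$ and $\cH^{1}_{\s}$ by standard commutator estimates between $h$ and $M$, so Duhamel's formula \eqref{eq:def_mild_HFB} is meaningful in $\XL$.

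For part (a), the goal is to show that $f:\XL\to\XL$ is locally Lipschitz, so that a Banach contraction argument on \eqref{eq:def_mild_HFB} produces the unique maximal mild solution. Since $v\in W^{p,1}$ with $p>d$ embeds into $L^{\infty}$, the Hartree and exchange terms $v*d(\gamma)$ and $v\sharp\gamma$ are bounded as operators by $\tr\gamma$, and their commutators with $M$ reduce to multiplication by $\nabla v\in L^{p}$, which is again handled by Sobolev embedding. The main obstacle, and the step where the proof in \cite{BachBreteauxChenFroehlichSigal2016v1} was flawed, is the control of $k(\s)=v\sharp\s$ in the $\cH^{1}_{\s}$ norm; one must establish
\[
\|M(v\sharp\s)\|_{\cL^{2}}+\|(v\sharp\s)M\|_{\cL^{2}}\ls \|v\|_{W^{p,1}}\|\s\|_{\cH^{1}_{\s}},
\]
which is exactly the content of Lemma~\ref{lem:k-est}. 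The strategy is to distribute $M$ by the product rule on the integral kernel $v(x-y)\s(x,y)$: the term in which $M$ hits $\s$ is absorbed into $\|\s\|_{\cH^{1}_{\s}}$, while the term carrying $\nabla v(x-y)$ is treated by H\"older's inequality with exponents $(p,2p/(p-2))$ in $x$ and the Sobolev embedding $H^{1}(\R^{d})\hookrightarrow L^{2p/(p-2)}$, which is available thanks to $d\le 3$ and $p>d$.

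For part (b), since $f$ is polynomial in $(\phi,\gamma,\s)$, the same commutator estimates carried out with $M^{3}$ in place of $M$ show that $f:X^{3}\to X^{3}$ is locally Lipschitz, so the mild solution persists in $X^{3}$; differentiability in time of \eqref{eq:def_mild_HFB} then promotes it to a classical $C^{1}([0,T);\XL)$ solution. Part (c) is immediate from this classical regularity combined with Theorem~\ref{thm:preservation-particle-number} applied to $\opA=\opN$ and $\opA=\opH$, using the explicit formula \eqref{energy} for $\cE$.

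For part (d), the natural approach uses the self-consistent form \eqref{eq-qf-self-consist}: since $\HHFB(\qf_{t})$ is a self-adjoint quadratic Hamiltonian, its propagator acts on the CCR algebra as a Bogoliubov transformation implemented by the symplectomorphisms $\cU_{t}$ described in the remark after Theorem~\ref{prop-HFB-vfull-1}, and conjugation of $\Gamma_{0}$ by $\cU_{t}$ preserves the positivity \eqref{Gam-pos}. Finally, part (e) follows from a blow-up alternative: it suffices to rule out finite-time blow-up of the $\XL$ norm. Conservation of particle number controls $\|\phi_{t}\|_{L^{2}}$ and $\tr\gamma_{t}$; the positivity (d) together with the operator inequality $\s^{*}\s\le \gamma(1+\bar{\gamma})$ read off from \eqref{Gam-pos} yields an $\cL^{2}$-bound on $\s_{t}$; and conservation of $\cE$, combined with conditions (i) and (ii') to absorb the interaction terms into the already-controlled lower-order norms, yields the remaining $H^{1}$-type bounds. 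This excludes blow-up and forces $T=\infty$.
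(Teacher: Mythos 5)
Your architecture for parts (a)--(d) is essentially the paper's: the Duhamel formulation \eqref{eq:def_mild_HFB} on $\XL$, a strongly continuous uniformly bounded semigroup generated by $A$ (the paper actually places the bounded linear term $k[\sigma]$ inside $A$ rather than in $f$, which is immaterial since Lemma~\ref{lem:k-est} makes it bounded), local Lipschitz continuity of $f$ hinging on the estimate $\|Mk[\sigma]\|_{\cL^{2}}\ls\|v\|_{W^{p,1}}\|\s\|_{\cH^{1}_\s}$ proved by exactly your Leibniz-plus-H\"older-plus-Sobolev argument on the kernel $v(x-y)\tilde\s(x,y)$, conservation laws via Theorem~\ref{thm:preservation-particle-number} applied to $\opN$ and $\opH$, and positivity via the symplectomorphism structure of the HFB flow.

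The genuine gap is in (e), in the uniform-in-time bound on $\|\sigma_t\|_{\cH^{1}_\s}$. You extract only an $\cL^{2}$ bound on $\sigma_t$ from $\Gamma_t\geq 0$ and then assign ``the remaining $H^{1}$-type bounds'' to energy conservation. But the energy \eqref{energy} sees $\sigma$ only through $\tfrac12\int v(x-y)|\sigma(x,y)+\phi(x)\phi(y)|^{2}\,dx\,dy$, which under (ii') is \emph{controlled by}, not \emph{controlling}, the lower-order quantities; it gives no coercive control of $\|M\sigma_t\|_{\cL^{2}}+\|\sigma_t M\|_{\cL^{2}}$, and without that the $\XL$-norm could still blow up in finite time. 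The paper closes this by using the $M$-weighted consequence of \eqref{Gam-pos}, namely \eqref{sig-est}: $\|\sigma\|^{2}_{\cH^{1}_\s}\leq 2\|\gamma\|_{\cH^{1}_\g}(1+\tr[\gamma])$, so the full $\cH^{1}_\s$ bound on $\sigma_t$ is inherited from the already-established $\cH^{1}_\g$ bound on $\gamma_t$ and the conserved trace. Relatedly, to get the $\cH^{1}_\g$ and $\Hd$ bounds from energy conservation you need the two-sided bound $\tfrac23\opH-C\opN^{2}\leq\mathbb T\leq\opH+C\opN^{2}$ of \eqref{eq:control-T-by-H-and-N} together with the quasifree moment inequality $\qf_t(\opN^{2})\leq(\qf_t(\opN)+C)^{2}$ of \eqref{eq:control-N-qf}; conservation of $\qf_t(\opN)$ alone does not control the $\opN^{2}$ error term unless you invoke quasifreeness at this point. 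With these two ingredients supplied, your argument coincides with the paper's.
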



%


In \cite{BachBreteauxChenFroehlichSigal2016v1}, we claimed this result under  conditions similar to (i) and (ii) of Section \ref{sec:many-body}, but a mistake infiltrated the proof of (a) (see Section \ref{sec-k-est}). Below, we sketch the main ideas of the proof of this theorem, and in Section \ref{sec-k-est} we give a correct proof of the estimate in question, but under the stronger conditions (i) and (ii'). 
 A proof under  conditions (i) and (ii) of Section \ref{sec:many-body} will be given elsewhere. 

First, setting $\rho:=(\phi,\gamma,\sigma)$ and separating the linear part, $A\rho$, from the non-linear part, $f(\rho)$, we can write the HFB equations \eqref{eq:BHF-phi} to \eqref{eq:BHF-sigma} in the form given in \eqref{eq:def_problem}, with
\eqn \label{eq:def-A}
A\rho=\big( h\phi \,,\, [h,\gamma] \,,\, [h,\sigma]_{+}+k[\sigma] \big)\,,
\eeqn
where the operators $h$, $k$ (and $b$ -- see below) are defined in \eqref{h} -  \eqref{v-sharp}, and with the non-linear part $f:=(f_{1},f_{2},f_{3})$ given by
\begin{align}
\label{eq:def_f_1}
f_{1}(\rho) & =b[\gamma]\phi+k[\sigma+\phi^{\otimes2}]\bar{\phi}\,, \\
\label{eq:def_f_2}
f_{2}(\rho) & =[b[\gamma+|\phi\rangle\langle\phi|],\gamma]+k[\sigma+\phi^{\otimes2}]\bar{\sigma}-\sigma\overline{k[\sigma+\phi^{\otimes2}]}\,, \\
\label{eq:def_f_3}
f_{3}(\rho) & =[b[\gamma+|\phi\rangle\langle\phi|],\sigma]_{+}+[k[\sigma+\phi^{\otimes2}],\gamma]_{+}\,.
\end{align}

 The proof of statement (a) is based on applying a standard fixed point argument to \eqref{eq:def_mild_HFB}, (using the 
Picard-Lindel\"of theorem).  To this end we show that $A$ generates 
 a strongly continuous, uniformly bounded semigroup on $\XL$, (in particular, that
the map $t\mapsto\|G(t)\|_{\cB(\XL)}$ is bounded), and that  $f$ 
 is locally Lipschitz.


To prove global existence, we use the fact that the kinetic energy operator
 \eqn \label{eqn:def-mathbb-T}
\mathbb T := \int dxdy \; \psi^*(x)(-\Delta)\psi(y) \,
\eeqn
controls, and is controlled, 
by the Hamiltonian operator $\opH$ introduced in \eqref{model-ham} 
and the number operator $\opN$. More precisely, the following inequalities hold in the sense of quadratic forms. 
\begin{align} \label{eq:control-T-by-H-and-N}
\frac{2}{3} \opH - C \opN^{2} 
\leq 
\mathbb T 
\leq 
\opH + C \opN^{2} \, ,
\end{align}
where $C \equiv C_{V,v} < \infty$ depends only on the external
potential $V$ and the pair potential $v$.


 Taking expectations of all the terms in \eqref{eq:control-T-by-H-and-N}
in the state $\qf_t$ we observe that, for an arbitrary positive integer $k$, there
exists a universal constant $C_k < \infty$ such that
\begin{align} \label{eq:control-N-qf}
\qf_t(\opN)^{k} 
\leq 
\qf_t\big( \opN^{k} \big) 
\leq 
\big[ \qf_t(\opN) + C_k \big]^{k} \, .  
\end{align}
The first inequality in \eqref{eq:control-N-qf} follows from the Jensen inequality while for the second one we have used that $\qf_t$ is quasifree.  Hence, using conservation 
of the particle number $\qf_t(\opN) = \qf_0(\opN)$ and of the energy 
$\qf_t(\opH) = \qf_0(\opH)$, we obtain upper and lower bounds on
$\qf_t(\mathbb T)$ in terms of $\qf_0(\mathbb T)$, \textit{uniformly} in $t$.
These bounds then imply
\DETAILS{by the Schr\"odinger operato $\opH$, \eqref{model-ham}, and the number of particles, $\opN$, as 
\begin{align}
\mathbb T \leq 3\opH + C \opN^{2} \,.
\end{align}
We now apply to this $\qf_t$ and use the conservation of the particle number and of the energy to obtain}
bounds on $\|\gamma_t\|_{\cH^{1}_\g}$ and $\|\phi_t\|_{\Hd}$ that are uniform in $t$. 
Moreover, uniform bounds on $\|\sigma_t\|_{\cH^{1}_\s}$ are obtained from the estimate (\cite{BachBreteauxChenFroehlichSigal2016v2}) 
 \begin{equation} \label{sig-est}\|\sigma\|^2_{\cH^{1}_\s}\leq 2\|\gamma\|_{\cH^{1}_\g}(1+ \tr[\gamma])\,,\end{equation} 
  which follows from the definitions in \eqref{omq-mom}, see  inequality \eqref{Gam-pos}. 
We thus conclude that the solution is global.

\begin{remark}    One can reformulate \eqref{eq-qf-self-consist} as a fixed point problem (see \cite{BachBreteauxChenFroehlichSigal2016v2}) which suggests a possibility of proving the existence result directly for \eqref{eq-qf-self-consist} without going to  the truncated expectations of $\qf$.
\end{remark}

\section{Estimate of the operator $k$} 
\label{sec-k-est}

    As was mentioned above, one of the key steps in the fixed-point argument used in the proof of statement (a) is to show that $f$  is locally Lipschitz. 
Here one uses the estimate
 \begin{align}\label{Mk-est}
\|Mk[\sigma]\|_{\cL^{2}} &\ls \|\s\|_{\cH^{1}_\s},\end{align}  
which implies the necessary estimates on the terms $k[\sigma]\bar{\phi}, $ $k[\sigma]\bar{\sigma}$, $\sigma\overline{k[\sigma]}$, $k[\sigma]\bar{\gamma}$ and $\gamma k[\sigma]$
   in \mbox{Eq.~\eqref{eq:def_f_1}$-$\eqref{eq:def_f_3}.} It is exactly in the proof of \eqref{Mk-est} - see Lemma E.1(2) of \cite{BachBreteauxChenFroehlichSigal2016v1} - 
    where an error has occurred in \cite{BachBreteauxChenFroehlichSigal2016v1}. 
   
Here we prove \eqref{Mk-est} under the condition 
 that $v \in W^{p, 1}$ with $p>d$. 

In what follows we use the notation  $A\lesssim B$ to represent an inequality of the form $A\leq c B$, for some positive constant $c$. 
 We begin our proof with the following lemma.
 
  \DETAILS{and from Lemma~\ref{lem:G(t)}, we obtain that $G(t)=\exp(itA)$ defines a strongly continuous 
 uniformly bounded semigroup on~$\XL$. 

Consequently, we can rewrite the HFB equations \eqref{eq:BHF-phi} - \eqref{eq:BHF-sigma} as a fixed point problem
\[
\rho_t=G(t) \rho_0 -i\int_0^t G(t-s) f\big(\rho_s\big) \, ds \,.
\] 
and use the Banach contraction theorem to show that \eqref{eq:BHF-phi} - \eqref{eq:BHF-sigma} have the unique local mild solution to in $\XL$
for the given initial data. (For the details for this standard argument, see \cite{McOw}, Section 9.2e, Theorem 3.)}

\DETAILS{We will now prove our main Lemmata on $G(t)=\exp(itA)$ and $f$. First, we {\bf recall the} norms: 
\[\|\phi\|_{H^j}:=\|M^{j}\phi\|_{L^{2}},\  \|\gamma\|_{\mathcal{H}^j}:=\|M^{j}\gamma M^{j}\|_{\mathcal{L}^{1}}.\]
Moreover, the  norm $\|\sigma\|_{H^{j}}$ is equivalent to the  norm
\[ \|\sigma\|_{H^j_s}:=\|(M^{2}\otimes1+1\otimes M^{2})^{j/2}\sigma\|_{L^{2}(\mathbb{R}^{2d})}\,.
\]
\begin{lemma}\label{lem:G(t)}
If $V$ is infinitesimally form-bounded with respect to the Laplacian, and {\bf $v\in W^{p, 1}(\R^d)$ for $ p> \max(d, 2)$}, 
 then $G(t)=\exp(itA)$ 
defines a strongly continuous, uniformly bounded semigroup on $\XL$, i.e.,
the map $t\mapsto\|G(t)\|_{\cB(\XL)}$ is bounded.
\end{lemma}
Note that the proof Lemma~\ref{lem:G(t)} uses that $-\Delta$ is $h$-bounded, and $h$ is $-\Delta$-bounded. 
 The proof also uses the translation invariance of~$M$.
\begin{proof} The letter $C$ denotes a constant, which changes along the computations below.
For $
(\phi,\gamma,\sigma)\in \XL$,
\begin{multline*}
\|\exp(-ith)\phi\|_{\Hd}^{2}  =\langle\phi,\exp(ith)M^{2}\exp(-ith)\phi\rangle\\
  \leq\langle\phi,\exp(ith)(h+k)\exp(-ith)\phi\rangle
 =\langle\phi,(h+k)\phi\rangle
  \leq C\|\phi\|_{\Hd}^{2}
\end{multline*}
for some $k>0$.
Similarily
$
\|\exp(-ith)\gamma\exp(ith)\|_{\cHLd}\leq C\|\gamma\|_{\cHLd}\,.
$
Finally, with $\tilde{h}=h\otimes1+1\otimes h+v(x-y)$, as quadratic forms on $L^{2}(\mathbb{R}^{2d})$, using that {\bf $v$ is bounded}, 
\begin{multline*}
\exp  (it\tilde{h})(M^{2}\otimes1+1\otimes M^{2})\exp(-it\tilde{h})\\
  \leq\exp(it\tilde{h})(\tilde h+2k)\exp(-it\tilde{h})= \tilde h+2k
  \leq C(M^{2}\otimes1+1\otimes M^{2})
\end{multline*}
and thus
$
\|\exp(-it\tilde{h})\sigma\|_{\Hsdd}\leq C\|\sigma\|_{\Hsdd}
$
which completes the proof.
\end{proof}  

The 
following lemma allows us to control the nonlinear term $f$ in the HFB equations.

\begin{lemma}\label{lem:f_C1}Assume  that the pair interaction potential {\bf $v$ is bounded} 
Then the vector of nonlinear terms $f=(f_1,f_2,f_3)$ defined in Eq.~\eqref{eq:def_f_1}$-$\eqref{eq:def_f_3}   
is continuously Fr\'echet differentiable in $\XL$ ($f\in C^{1}(\XL)$).
\end{lemma}
The error in \cite{BachBreteauxChenFroehlichSigal2016v1} was in the proof of the estimates on the terms $k[\sigma]\bar{\phi}, $ $k[\sigma]\bar{\sigma}$, $\sigma\overline{k[\sigma]}$, $k[\sigma]\bar{\gamma}$ and $\gamma k[\sigma]$ in Eq.~\eqref{eq:def_f_1}$-$\eqref{eq:def_f_3}. Hence, we concentrate on these terms.}

 \begin{lemma} \label{lem:k-est} 
Assume that $v \in W^{p, 1}$, with $p>d$. 
Then the operator 
$k$ defined in \eqref{k} and \eqref{v-sharp}
satisfies the bound \eqref{Mk-est}.
\DETAILS{\begin{itemize}
\item \label{enu:continuity-K-H1L2-L2H1} $k$ is 
continuous from $\Hsdd$ to $M^{-1}\cL^2(L^2)$. 
\end{itemize}}
\end{lemma}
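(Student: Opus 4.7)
The plan is to identify Hilbert--Schmidt operators with their integral kernels and prove the equivalent $L^2(\R^{2d})$ bound
\[
\|\langle\nabla_x\rangle(v(x-y)\s(x,y))\|_{L^2(\R^{2d})}\ls \|\s\|_{\cH^{1}_{\s}}.
\]
First I would pass to the equivalent norm $\|\cdot\|_{L^2}+\|\nabla_x\cdot\|_{L^2}$ and use the product rule to split the $x$-derivative into one piece landing on $v(x-y)$ and one landing on $\s(x,y)$. The terms where $v(x-y)$ acts as multiplication are immediate once one knows $v\in L^\infty$, which holds by the Morrey embedding $W^{1,p}(\R^d)\hookrightarrow L^\infty(\R^d)$ for $p>d$; they are bounded by $\|v\|_\infty$ times $\|\s\|_{\cL^{2}}$ or $\|M\s\|_{\cL^{2}}$, each of which is controlled by $\|\s\|_{\cH^{1}_{\s}}$.

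The decisive step is the cross term $(\nabla v)(x-y)\s(x,y)$, for which I would \emph{not} try to throw $\nabla v$ entirely onto $\s$ via an $L^\infty$-Sobolev bound, since $\nabla v$ is only in $L^p$. Instead, for each fixed $x$ I would apply H\"older in $y$ with dual exponents $p/2$ and $p/(p-2)$ and use the translation invariance of the $L^p$-norm of $\nabla v$:
\[
\int |(\nabla v)(x-y)|^2|\s(x,y)|^2\,dy \le \|\nabla v\|_{L^p}^2\,\|\s(x,\cdot)\|_{L^q_y}^2,\qquad q=\tfrac{2p}{p-2}.
\]
Integrating in $x$ reduces everything to bounding $\int\|\s(x,\cdot)\|_{L^q_y}^2\,dx$. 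The condition $p>d$ gives $\tfrac{1}{q}=\tfrac{1}{2}-\tfrac{1}{p}>\tfrac{1}{2}-\tfrac{1}{d}$, which is precisely the subcritical range for the Sobolev embedding $H^1(\R^d)\hookrightarrow L^q(\R^d)$; applying it in the $y$-variable then yields
\[
\int\|\s(x,\cdot)\|_{L^q_y}^2\,dx\ls \int\|\s(x,\cdot)\|_{H^1_y}^2\,dx = \|\s\|_{\cL^{2}}^2+\|\s M\|_{\cL^{2}}^2 \ls \|\s\|_{\cH^{1}_{\s}}^2,
\]
where I use that the kernel of the Hilbert--Schmidt operator $\s M$ is $(M_y\s)(x,y)$, so that $\int\|M_y\s(x,\cdot)\|_{L^2_y}^2\,dx=\|\s M\|_{\cL^{2}}^2$.

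I expect this cross term to be the main obstacle. Because $\nabla v$ lies only in $L^p$ and not in $L^2$, the estimate cannot close by pitting $\nabla v$ against an $L^\infty$ bound on $\s$; one must redistribute integrability via H\"older and pay for it with Sobolev regularity of $\s$ \emph{in the second variable}. This is the conceptual reason the norm $\|\cdot\|_{\cH^{1}_{\s}}$ must contain both $\|M\s\|_{\cL^{2}}$ and $\|\s M\|_{\cL^{2}}$: the former is used by the easy terms, while the latter is exactly what the Sobolev step in $y$ consumes. The hypothesis $p>d$ is sharp for this scheme, since it places the dual exponent $q=\tfrac{2p}{p-2}$ in the Sobolev range of $H^1(\R^d)$; weakening it would require a more refined Littlewood--Paley argument, deferred to future work.
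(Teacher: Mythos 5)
Your proof is correct and follows essentially the same route as the paper's: identify $k[\sigma]$ with its kernel $v(x-y)\tilde\sigma(x,y)$, apply the Leibniz rule, absorb the terms where $v$ multiplies via $\|v\|_{L^\infty}$, and handle the cross term $(\nabla v)(x-y)\tilde\sigma(x,y)$ by H\"older against $\|\nabla v\|_{L^p}$ followed by the subcritical Sobolev embedding $H^1\hookrightarrow L^q$, $q=\tfrac{2p}{p-2}$, which is exactly where $p>d$ enters. The only (immaterial) difference is that you slice in the $y$-variable and spend $\|\sigma M\|_{\cL^2}$ on the cross term, whereas the paper slices in $x$ and spends $\|M\sigma M^{0}\|_{\cL^2}$-type control there; by translation invariance of $\|\nabla v(\cdot-z)\|_{L^p}$ both choices are equivalent and both are covered by the $\cH^1_\sigma$ norm.
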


\begin{proof} 
\DETAILS{For the detailed proof of statement \eqref{enu:continuity-B-H1L2-LH1},
we refer to~\cite{Bove197625, BachBreteauxChenFroehlichSigal2016v2}. 
For the reader's convenience, we recall here the main arguments.
We first consider the direct term, i.e., the first term in the definition of~$b$.  
It is sufficient to prove that $v*n$ (with functions $n(x)=\gamma(x;x)$) and $\nabla v*n$ uniformly bounded by $\|\gamma\|_{\cHLd}$. As those two bounds are very similar, we focus on the more difficult one, $\nabla v*n$.

Since $v \in W^{p, 1}(\R^d)$ with $p>d$, we conclude that $v$ is bounded. Since $\nabla_x \int_{\R^d} v(x-y) \, \gamma(y;y)dy=\int_{\R^d} v(x-y) \, \nabla_y \gamma(y;y)dy$, we have
\begin{align}
\big\|\nabla_x \int_{\R^d} v(x-y) \, \gamma(y;y)dy\big\|_\infty 
&\leq \big\|v\big\|_\infty \int_{\R^d} |\nabla_y \gamma(y;y)| dy\end{align}
Furthermore, $\int_{\R^d} |\nabla_y \gamma(y;y)| dy\leq  \|\gamma\|_{\cHLd}$, which can proved by using the decomposition $\gamma=\sum_{j=1}^\infty \lambda_j |\varphi_j\rangle\langle\varphi_j|$ with $\lambda_j\geq 0$ of $\gamma$, combined with the Cauchy-Schwarz inequality: 
\begin{align}
\int_{\R^d} |\nabla_y \gamma(y;y)| dy 
&\leq \sum_{j=1}^\infty \lambda_j  \int_{\R^d} |\varphi_j(y)\nabla \varphi_j(y)| dy\\
&\leq \sum_{j=1}^\infty \lambda_j   \| \varphi_j\|_{L^2}  \|\nabla \varphi_j\|_{L^2} \\
&\leq  \sum_{j=1}^\infty \lambda_j  \|M \varphi_j\|^2_{L^2}\leq  \|\gamma\|_{\cHLd}
\end{align}
The last two estimates imply the desired result, $\|\nabla v*n \|_\infty\leq  \|\gamma\|_{\cHLd}$. 
The estimates for the exchange term (the second term in the definition of $B$) are similar.

Point (\ref{enu:continuity-K-H1L2-L2H1}) is equivalent to the estimate
\begin{align}\label{Mk-est}
\|Mk[\sigma]\|_{\cL^{2}} &\le C\|\s\|_{H_{s}^{1}},\end{align}
which we now prove.}
Denote   by $\tilde\s$ 
 the (generalized) integral kernel of an operator $\s$. 
Clearly, $\|\sigma\|_{\mathcal{H}^j_\s}\simeq \|\tilde\s\|_{H^{1}}$. 
Denote by $a(x,y)=v(x,y) \tilde\sigma(x, y)$, 
 the integral kernel of $k[\sigma]$. We have that
\begin{align}\label{Mk-est2}
\|Mk\|_{\cL^{2}}^{2} & =\int\int|M_{x}a(x,y)|^2dxdy \le\|a\|_{H^{1}}^{2}.\end{align}
Since $a(x,y)=v(x,y) \tilde\s(x,y)$  and 
\[\|a \|_{H^{1}}\le \|a \|_{L^{2}}+ \|\p_x a \|_{L^{2}}+ \|\p_y a \|_{L^{2}}\,,\] 
we use the Leibniz rule,  $\p_x a(x,y)=(\p_x v(x,y)) \tilde\s(x,y) + v(x,y) \p_x \tilde\s(x,y)$, to find that
\begin{align}
\label{a-estim}\|a \|_{H^{1}}\le &(\|v \|_{L^{\infty}} 
+ \|\p_x v M_x^{-1}\|+ \|\p_y v M_y^{-1}\|) \|\tilde\s \|_{H^{1}},\end{align}
where the norms without subindices are the operator norms for operators on $L^2(\R^d_x\times \R^d_y)$. 
The Schwartz and Sobolev inequalities imply that $$\|\p_x v f\|_{L^{2}}\le  \|\p_x v \|_{L^{p}}\|  f\|_{L^{s}}\ls \|v \|_{W^{p, 1}}\| M f\|_{L^{2}},$$ for arbitrary $s$ and $p$ satisfying $\frac1p+\frac1s = \frac12$ and $p>d$. Thus $$\|\p_x v M_x^{-1}\|\ls \|v \|_{W^{p, 1}}\,,$$ and, similarly, $\|\p_y v M_y^{-1}\|\ls  \|v \|_{W^{p, 1}}$. It follows that
\begin{align}\label{a-est'}
\|a \|_{H^{1}}\ls \|v \|_{W^{p, 1}}\|\tilde\s\|_{H^{1}}.\end{align}
This,
 together with \eqref{Mk-est2} and $\|\tilde\s\|_{H^{1}}\simeq \|\sigma\|_{\mathcal{H}^1_\s}$, yields \eqref{Mk-est}. 
\end{proof}
\begin{corollary} The following estimates hold true 
\begin{align}\label{k-est}
\|k[\sigma]\bar{\phi}\|_{\Hd}
\ls \|\sigma\|_{\mathcal{H}^1_\s}\|\phi\|_{L^2},\ \|k[\sigma]\bar{\sigma}\|_{\mathcal{H}^1_\g} 
\ls \|\sigma\|_{\mathcal{H}^1_\s}^{2} \,, \|k[\sigma]\bar{\gamma}\|_{\mathcal{H}^1_\s}   \ls \|\sigma\|_{\mathcal{H}^1_\s}\|\gamma\|_{\mathcal{H}^1_\g},  \end{align} 
 and similarly for the terms  $\sigma\overline{k[\sigma]}$ and $\gamma k[\sigma]$. \end{corollary}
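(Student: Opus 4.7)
The plan is to reduce each of the five bounds to Lemma \ref{lem:k-est} together with the standard Schatten–von Neumann Hölder inequalities $\|AB\|_{\mathcal{L}^1}\le\|A\|_{\mathcal L^2}\|B\|_{\mathcal L^2}$, $\|AB\|_{\mathcal L^2}\le\|A\|_{\mathcal L^2}\|B\|_{\rm op}$, and the elementary operator estimate $\|Af\|_{L^2}\le\|A\|_{\mathcal L^2}\|f\|_{L^2}$. A preliminary observation I would record is that the proof of Lemma \ref{lem:k-est} is fully symmetric in the two kernel variables: the estimate \eqref{a-est'} bounds $\|a\|_{H^1(\R^{2d})}$, which controls both the $\partial_x$ and $\partial_y$ derivatives of the integral kernel of $k[\sigma]$. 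Consequently one obtains, at no extra cost, the twin bound $\|k[\sigma]M\|_{\mathcal L^2}\ls\|\sigma\|_{\mathcal H^1_\s}$, as well as the operator-norm bound $\|k[\sigma]\|_{\rm op}\le\|k[\sigma]\|_{\mathcal L^2}\ls\|\sigma\|_{\mathcal H^1_\s}$ (using $M\ge1$).

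For $k[\sigma]\bar\phi$, I would interpret $M(k[\sigma]\bar\phi)$ as the operator $Mk[\sigma]$ acting on the function $\bar\phi$, giving
\[
\|k[\sigma]\bar\phi\|_{\Hd}=\|Mk[\sigma]\bar\phi\|_{L^2}\le\|Mk[\sigma]\|_{\mathcal L^2}\|\phi\|_{L^2}\ls\|\sigma\|_{\mathcal H^1_\s}\|\phi\|_{L^2}.
\]
For the operator–operator products $k[\sigma]\bar\sigma$ and $\sigma\,\overline{k[\sigma]}$, the definition of the $\mathcal H^1_\g$-norm and the Hölder inequality give, for instance,
\[
\|k[\sigma]\bar\sigma\|_{\mathcal H^1_\g}=\|Mk[\sigma]\bar\sigma M\|_{\mathcal L^1}\le\|Mk[\sigma]\|_{\mathcal L^2}\,\|\bar\sigma M\|_{\mathcal L^2}.
\]
The factor $\|\bar\sigma M\|_{\mathcal L^2}$ equals $\|\sigma M\|_{\mathcal L^2}\le\|\sigma\|_{\mathcal H^1_\s}$ since complex conjugation $C$ commutes with $M$ (both preserve reality of Fourier symbols) and preserves the Hilbert–Schmidt norm. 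The term $\sigma\,\overline{k[\sigma]}$ is handled by the mirror split $\|M\sigma\|_{\mathcal L^2}\|\overline{k[\sigma]}M\|_{\mathcal L^2}$, where the second factor equals $\|k[\sigma]M\|_{\mathcal L^2}$ by the same conjugation argument and is then controlled via the symmetric variant of Lemma \ref{lem:k-est}.

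For the remaining products $k[\sigma]\bar\gamma$ and $\gamma\,k[\sigma]$, the $\mathcal H^1_\s$-norm splits into two Hilbert–Schmidt pieces, and I would bound
\[
\|Mk[\sigma]\bar\gamma\|_{\mathcal L^2}\le\|Mk[\sigma]\|_{\mathcal L^2}\|\gamma\|_{\rm op},\qquad
\|k[\sigma]\bar\gamma M\|_{\mathcal L^2}\le\|k[\sigma]\|_{\rm op}\|\gamma M\|_{\mathcal L^2}.
\]
The auxiliary ingredient one needs is $\|\gamma\|_{\rm op}+\|\gamma M\|_{\mathcal L^2}\ls\|\gamma\|_{\mathcal H^1_\g}$; the first piece follows from $\|\gamma\|_{\rm op}\le\|\gamma\|_{\mathcal L^1}\le\|M\gamma M\|_{\mathcal L^1}$, while the second uses positivity of $\gamma$ via
\[
\|\gamma M\|_{\mathcal L^2}^2=\tr(M\gamma^2M)\le\|\gamma\|_{\rm op}\tr(M\gamma M)\le\|\gamma\|_{\mathcal H^1_\g}^2.
\]
The case $\gamma k[\sigma]$ is identical after swapping roles. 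The only mildly delicate point in the entire argument is this last positivity-based estimate; everything else is a direct application of Lemma \ref{lem:k-est} and Schatten Hölder, so no genuine obstacle is anticipated.
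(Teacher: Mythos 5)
Your proposal is correct and follows essentially the same route as the paper: every estimate is reduced to Lemma~\ref{lem:k-est} (together with its mirror bound $\|k[\sigma]M\|_{\cL^2}\ls\|\s\|_{\cH^1_\s}$, which, as you rightly note, comes for free from the symmetry of the kernel bound \eqref{a-est'}) combined with the Schatten--H\"older inequalities. The only cosmetic difference is in the $\gamma$-terms, where the paper splits as $\|k[\sigma]\|_{\cL^2}\|\bar\gamma M\|_{\cB}$ and bounds $\|\bar\gamma M\|_{\cB}\le\|M\bar\gamma M\|_{\cL^1}$ using $M\ge 1$, so your positivity-based estimate $\tr(M\gamma^2M)\le\|\gamma\|_{\cB}\tr(M\gamma M)$, while valid, is not actually needed.
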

\begin{proof} 
\DETAILS{Each $f_j$ is a linear combination of multi-linear maps.
It is thus enough to prove continuity estimates for each of those multi-linear terms to
prove that $f$ is both well defined and Fr\'echet differentiable. It is sufficient to prove that, for the quadratic part  
\begin{multline*}
\big\|\big( b[\gamma]\phi+k[\sigma]\bar{\phi}\,,\,[b[\gamma],\gamma]+k[\sigma]\bar{\sigma}-\sigma\overline{k[\sigma]}\,,\\
[b[\gamma],\sigma]_{+}+[k[\sigma],\gamma]_{+}\big)\big\|_{\XL}\leq C\|\rho\|_{\XL}^{2}\,,
\end{multline*}
and, for the cubic part
\begin{multline*}
\big\|\big( k[\phi^{\otimes2}]\bar{\phi}\,,\,[b[|\phi\rangle\langle\phi|],\gamma]+k[\phi^{\otimes2}]\bar{\sigma}-\sigma\overline{k[\phi^{\otimes2}]}\,,\\
[b[|\phi\rangle\langle\phi|],\sigma]_{+}+[k[\phi^{\otimes2}],\gamma]_{+}\big)\big\|_{\XL}\leq C\|\rho\|_{\XL}^{3}\,.
\end{multline*}
All the cubic estimates can be deduced
from their quadratic counterparts using
\[
\||\phi\rangle\langle\phi|\|_{\cHLd}\leq\|\phi\|_{\Hd}^{2}
\quad \text{and} \quad
\|\phi\otimes\phi\|_{\Hsdd}\leq\|\phi\|_{\Hd}^{2}\,.
\]
We thus only consider the quadratic terms. We estimate $[b[\gamma],\gamma]$ using Lemma~\ref{prop:continuity_of_B_and_K}.(\ref{enu:continuity-B-H1L2-LH1})
\begin{multline*}
\|[b[\gamma],\gamma]\|_{\cHLd} \leq 2\|M b[\gamma]M^{-1}M\gamma M\|_{\mathcal L^1(L^2(\mathbb R^d))} \\
 \leq  2\|b[\gamma]\|_{\cHLd} \|\gamma \|_{ \cHLd}\leq C \|\gamma\|^2_{\cHLd}
 \leq C \|\rho \|_{ \XL}^2\,.
\end{multline*}
The term $[b[|\phi\rangle\langle\phi|],\gamma]$ is controlled with the same method. Let us give the estimates for the first term $b[\gamma]\phi$ in full
detail. Using Lemma~\ref{prop:continuity_of_B_and_K}\eqref{enu:continuity-B-H1L2-LH1}
\begin{align*}
\|(b[\gamma]\phi, 0, 0)\|_{\XL}=\|b[\gamma]\phi\|_{\Hd} 
\leq C\|\gamma\|_{\cHLd}\|\phi\|_{\Hd}\leq C\|\rho\|_{\XL}^{2}\,.
\end{align*}

For the other terms we only give the main steps. 
{\bf(the text between $***$'s is revised)} $****$}

 For~$k[\sigma]\bar{\phi}$, we use Lemma~\ref{lem:k-est} 
 to find that
$$
\|k[\sigma]\bar{\phi}\|_{\Hd}
\leq\|M k[\sigma]\|_{\cB}\|\bar{\phi}\|_{L^2}\leq C\|\sigma\|_{\mathcal{H}^1_\s}\|\phi\|_{L^2}\,.
$$
For $k[\sigma]\bar{\sigma}$ (and, similarly, for $\sigma\overline{k[\sigma]}$), 
the inequality $$\|k[\sigma]\bar\s\|_{\cH^{1}_\g}=\|Mk[\sigma]\bar\s M\|_{\cL^{1}}\le\|M k[\sigma]\|_{\cL^{2}}\|\bar\s M\|_{\cL^{2}}$$ 
and Lemma~\ref{lem:k-est} 
 (see estimate \eqref{Mk-est}),  
as well as the bound $\|\bar\s M\|_{\cL^{2}}\le \|\s\|_{\mathcal{H}^1_\s}$ (which follows from the definition of $\|\s \|_{\mathcal{H}^1_\s}$) yield the second estimate in \eqref{k-est}.

%
\DETAILS{For $b[\gamma]\sigma$ (or similarly $\sigma\overline{b[\gamma]}$), using Lemma~\ref{prop:continuity_of_B_and_K}\eqref{enu:continuity-B-H1L2-LH1}, we obtain
\begin{align*}
\|b[\gamma]\sigma\|_{\Hsdd}  \leq\|M b[\gamma]M^{-1}\|_{\cB(L^{2})}\|M\sigma M\|_{\mathcal{L}^{2}(L^{2})}
  \leq C\|\gamma\|_{\cHLd}\|\sigma\|_{\Hsdd} \,.
\end{align*}}
To conclude we note that, for $k[\sigma]\bar{\gamma}$ (and, similarly, for $\gamma k[\sigma]$), using Lemma~\ref{lem:k-est} 
 (see estimate \eqref{Mk-est}), we arrive at the inequality
\begin{align*}
\|k[\sigma]\bar{\gamma}\|_{\mathcal{H}^1_\s}  \leq\|M k[\sigma]\|_{\mathcal{L}^{2}}\|\bar{\gamma}\|_{\cB} + \| k[\sigma]\|_{\mathcal{L}^{2}}\|\bar{\gamma}M\|_{\cB}
  \leq C\|\sigma\|_{\mathcal{H}^1_\s}\|\gamma\|_{\mathcal{H}^1_\g}\,,
\end{align*}
where, recall, $\cB$ is the space of bounded operators on $L^{2}(\R^d)$. This completes the proof.  
  \end{proof}

\DETAILS{\begin{proof}[Proof of Theorem~\ref{thm:existence-uniquenes-up-to-Coulomb}.(\ref{ite:local-existence-classical}) {[Local Classical Solutions]}]
The existence of classical solutions to the HFB equations for initial data in $X^3$ then follows from:
\begin{lemma}[See {\cite[Lemma 3.1]{MR0152908}}.]
\label{lem:from_mild_to_classical}
If $-iA$ is the generator of a continuous one-parameter semi-group in the Banach space $X$, and if $f$ is continuously differentiable on $X$, then a mild solution of Eq.~\eqref{eq:def_problem} has its values in the domain $\mathcal D(A)$ of $A$ throughout its interval of existence provided this is the case initially.

In other words, $\rho_t$, if it exists at all, then satisfies the differential equation \eqref{eq:def_problem}  in the obvious sense.\qedhere
\end{lemma}
\end{proof}

\begin{proof}[Proof of Theorem~\ref{thm:existence-uniquenes-up-to-Coulomb}.(\ref{ite:conservation-laws}) {[Conservation Laws]}]
For classical solutions, the conservation of the number particle and of the energy were proven as consequences of the same conservation laws for the many body system in Theorem \ref{prop-en-vfull-1} and \ref{thm:preservation-particle-number}. Another proof of the conservation law for the energy  using only the HFB equations (independently from the many body problem) was given in Prop.~\ref{prop:quasi-hamiltonian-structure}, and the conservation of the particle number could also be proven directly from \eqref{eq:BHF-gamma}. We can now use those results since we proved the local existence of a classical solution. The conservation laws then extend to mild solutions by approximation.
\end{proof}

\begin{proof}[Proof of Theorem~\ref{thm:existence-uniquenes-up-to-Coulomb}.(\ref{ite:global-existence}) {[Global Solution]}]
We recall that for a maximal solution $\rho_t$ of the mild problem \eqref{eq:def_mild_HFB} defined on an interval $[0,T)$, we have that either $T=\infty$ or $\sup_{t\in [0,T)}\|\rho_t\|_{\XL}=\infty$ (see, e.g., Theorem 4.3.4 in~\cite{MR1691574}). It is thus enough to prove that
\[
\sup_{t\in [0,T)} \big\{
\|\phi_t\|_{\Hd},
\|\gamma_t\|_{\cHLd},
\|\sigma_t\|_{\Hsdd}\big\}<\infty
\] to show that the solutions are global.

Let
\eqn \label{eqn:def-mathbb-T}
\mathbb T := \int dxdy \; \psi^*(x)(-\Delta)\psi(y) \,,
\eeqn
Because $V$ is infinitesimally form bounded with respect to the Laplacian,
\eqn\label{eqn:estimate-dGamma-V}
 \int dx \; \psi^*(x)\psi(x)V(x)\geq -\frac{1}{3}\mathbb T - c\opN
\eeqn
holds. 
And, because {\bf the pair potential $v$ is infinitesimally $\Delta-$bounded},
i.e. for any $\varepsilon \in (0,1]$,
\eqn
|v| 
\leq -\varepsilon\Delta + C\varepsilon^{-1}\,
\eeqn
(e write $C$ for constants which depend on $v$, $d$ and change along the estimates) 
we have, taking $\varepsilon = 1/(3(n-1))$,   
\eqn
v(x-y)\geq -\frac{1}{6(n-1)} (-\Delta_x-\Delta_y) - C (n-1) \,.
\eeqn 
Then, summing the $n(n-1)/2$ terms of this form on each $n$-particles subspace of the Fock space, we obtain that
\eqn \label{eqn:estimate-dGamma2-v}
\mathbb V := \frac{1}{2}\int dx dy \; v(x-y)\psi^*(x) \psi^*(y)
    \psi(x) \psi(y) \geq  
-\frac23\mathbb T-C\opN^{3} \,.
\eeqn
Hence, from the definition of $\opH$, \eqref{eqn:estimate-dGamma-V} and \eqref{eqn:estimate-dGamma2-v} we get
\begin{align}
\mathbb T \leq 3\opH + C \opN^{2} \,.
\end{align}
We now take the expectation value of $\qf_t$. Using the conservation of the particle number and of the energy,  
\begin{align}
\tr[-\Delta(\gamma_t+|\phi_t\rangle \langle\phi_t|)] & \leq C(\mathcal E(\phi_t,\gamma_t,\sigma_t)+\cN(\phi_t,\gamma_t,\sigma_t)^{3}+1) \\
& \leq C(\mathcal E(\phi_0,\gamma_0,\sigma_0)+\cN(\phi_0,\gamma_0,\sigma_0)^{3}+1) \,.
\end{align}
Combined with the conservation of the particle number, this estimate provides 
bounds on $\|\gamma_t\|_{\cHLd}$ and $\|\phi_t\|_{\Hd}$ that are uniform in $t$. 
Moreover, uniform bounds on $\|\sigma_t\|_{\Hsdd}$ are then obtained from Proposition~\ref{prp-gampos-sigsym-1}. 
It thus follows that the solution is global, as claimed.
\end{proof}

\appendix


\section{Self-adjointness of the Hamiltonian $\opH$}
\label{sec:Self-adjointness_H}
{\bf (revised)} 
The same arguments as those used to prove~\eqref{eqn:estimate-dGamma2-v} allow to deduce
\eqn
\mathbb V \leq C(\mathbb T + \mathbb N^3)
\eeqn
for some $C >0$, with $\mathbb V$ and $\mathbb T$  defined in~\eqref{eqn:estimate-dGamma2-v} and~\eqref{eqn:def-mathbb-T}. 
 \DETAILS{from 
\eqn
|v| 
\leq -\varepsilon\Delta + C\varepsilon^{-1}\,.
\eeqn
for some $C >0$.} 
One can then use the KLMN theorem and the Nelson theorem (see \cite{MR0493420,MR0345561}) to prove the self-adjointness of $\opH$. (Details can be adapted from, e.g., \cite[Section 3]{MR2953701}.)}

\subsection*{Acknowledgements}
The authors are grateful to J.~Sok for pointing out an error in
\cite{BachBreteauxChenFroehlichSigal2016v1}. The work of S.B.\ has
been supported by the Basque Government through the BERC 2014-2017
program, and by the Spanish Ministry of Economy and Competitiveness
MINECO (BCAM Severo Ochoa accreditation SEV-2013-0323, MTM2014-53850),
and the European Union's Horizon 2020 research and innovation
programme under the Marie Sklodowska-Curie grant agreement No 660021.
The work of T.C.\ is supported by NSF grants DMS-1151414 (CAREER) and
DMS-1716198. The work of I.M.S.\ is supported in part by NSERC Grant
No.~NA7901 and by SwissMAP.

\bibliographystyle{plain}

\end{document}